\newcounter{RomanNumber}
\newcommand{\MyRoman}[1]{\setcounter{RomanNumber}{#1}\Roman{RomanNumber}}
\newcommand{\ifa}{IFA\xspace}
\newcommand{\stvii}{STV2I\xspace}
\newcommand{\tdr}{3D-Rtree\xspace}
\newcommand{\sii}{S2I\xspace}
\newcommand{\hiq}{HI-Quadtree\xspace}
\newcommand{\qdtree}{Quadtree\xspace}
\newcommand{\iqdtree}{Inverted Quadtree\xspace}
\newcommand{\pyr}{Pyramid\xspace}
\newcommand{\rtree}{R-tree\xspace}
\begin{document}

\title{Hierarchical Information Quadtree: Efficient Spatial Temporal Image Search for Multimedia Stream%\thanks{Grants or other notes
%about the article that should go on the front page should be
%placed here. General acknowledgments should be placed at the end of the article.}
}
\subtitle{}

%\titlerunning{Short form of title}        % if too long for running head

\author{Chengyuan Zhang         \and
        Ruipeng Chen         \and
        Lei Zhu   \and
        Anfeng Liu      \and
        Yunwu Lin       \and
        Fang Huang     \and %etc.
}

%\authorrunning{Short form of author list} % if too long for running head

\institute{Chengyuan Zhang \at
              \email{cyzhang@csu.edu.cn}
            \and
            Ruipeng Chen \at
              \email{rpchen@csu.edu.cn}
            \and
            Lei Zhu \at
              \email{leizhu@csu.edu.cn}
            \and
            \Letter Anfeng Liu \at
              \email{anfengliu@csu.edu.cn}
            \and
           Yunwu Lin \at
              \email{lywcsu@csu.edu.cn}
              \and
           Fang Huang \at
              \email{hfang@csu.edu.cn}
           \\
           \\
           School of Information Science and Engineering, Central South University, PR China \\
           Big Data and Knowledge Engineering Institute, Central South University, PR China
}

\date{Received: date / Accepted: date}
% The correct dates will be entered by the editor

\maketitle

\begin{abstract}
Massive amount of multimedia data that contain times-
tamps and geographical information are
being generated at an unprecedented scale in many emerging applications such as photo
sharing web site and social networks applications. Due to their importance, a large body of work has focused on efficiently
computing various spatial image queries. In this paper,we study the spatial temporal image
query which considers three important constraints during the search including time recency, spatial proximity and visual relevance. A novel index structure, namely Hierarchical Information Quadtree(\hiq), to efficiently insert/delete
spatial temporal images with high arrive rates. Base on \hiq an efficient algorithm is developed
to support spatial temporal image query. We show via extensive experimentation with real
spatial databases clearly demonstrate the efficiency of our methods.

\keywords{Hierarchical information quadtree, spatial temporal image search, multimedia stream}
% \PACS{PACS code1 \and PACS code2 \and more}
% \subclass{MSC code1 \and MSC code2 \and more}
\end{abstract}

\section{Introduction}
\label{intro}

Due to the rapid development of modern web services and popularization of mobile smart devices, massive amount of multimedia data that contain both text information and geographical location information are being generated at an unprecedented scale. For instance, Facebook, the most popular social networks service, reports 350 million photos uploaded daily as of November 2013. More than 400 million tweets containing texts and images have been generated by 140 million Twitter users everyday. Tweets, each containing up to 140 characters, can be associated with locations, which may be coordinates (latitude and longitude) or semantic locations. Flickr, the largest photo sharing web site, had a total of 87 million registered members and more than 3.5 million new images uploaded daily in March 2013. 100 hours of video are uploaded to YouTube every minute, resulting in more than 2 billion videos totally by the end of 2013. Other social networks applications such as WeChat, Instgram, Weibo, Pinterest, generate vast amount of multimedia data day by day and are shared over the world. Mobile smart devices, such as smartphone, tablet and smart watch which are equiped with GPS module and wireless communication module, can take photos, make videos or post messages to social platforms. These multimedia data generally contain timestamps and geographical information. For another example, check-ins or
reviews in location based social networks (e.g., Foursquare) contain both text descriptions and locations of points of interest (POIs). The emergence of massive multimedia lead to the new requirement such as temporal spatial multimedia data searching.

Top-$k$ temporal spatial image queries are intuitive and constitute a useful tool for many applications. It aims to find temporal spatial image objects that attend the three criteria simultaneously: they are similar or relevant in the aspect of visual content, they are inside the spatial area of interest and they are inside the query time constraints. Fig  illustrates the spatial temporal images in a three dimension space: longitude $x$, latitude $y$ and time $t$. However, processing top-$k$ temporal spatial image queries efficiently is complex and requires a hybrid index combining information retrieval and spatial indexes. Besides, It is costly and involves accessing a huge amount of temporal geo-tagged multimedia data before finding the result set. To the best of our knowledge, we are the first to study this important issue and the previous works Focused on top-$k$ spatial image search without temporal information. The state-of-the-art approaches proposed by Alfarrarjeh et al.~\cite{DBLP:journals/corr/AlfarrarjehS17} employ a hybrid index that evaluate both spatial and visual features in tandem.

\textbf{Challenges.} There are two key challenges in efficiently
processing spatial temporal image queries over spatial temporal multimedia streams. Firstly, a massive
number of geo-temporal images, typically in the order of millions, are
posted in many applications, and hence even a small increase
in efficiency results in significant savings. Secondly,
the streaming geo-temporal images may
continuously arrive in a rapid rate which also calls for high
throughput performance for better user satisfaction.

Based on the above challenges, we propose a novel index technique,
Hierarchical Information Quadtree(\hiq for short), to
effectively organize continuous spatial temporal multimedia streams. In a nutshell, \hiq consists of two parts temporal segment and inverted quadtree.
temporal segment makes sure that the newly incoming spatial temporal images are always inserted in the most recent segment, and provides timely answers to spatial temporal image queries. The inverted quadtree is essentially an quadtree, each node of which is enriched
with reference to an inverted file for the images contained in the sub-tree rooted at the
node. Through inverted quadtree, we takes spatial and
visual dimensions into consideration synchronously during query processing. Extensive experiments show that our \hiq based
spatial temporal image search algorithm achieves very substantial improvements
over the nature extensions of existing techniques due to
strong filtering power.

%a novel method for processing top-$k$ temporal spatial image queries more efficiently. Instead of employing a single R-tree embedded with inverted indexes, we propose a new index named

\textbf{Contributions.} The principle contributions of this paper are summarized as follows.
\begin{itemize}
\item We formulate the problem of spatial temporal image query and identify its applications.
\item To facilitate the spatial temporal image search,
we propose a novel indexing structure namely \hiq to effectively organize spatial temporal images.
\item Based on \hiq, we develop an efficient spatial temporal image search algorithm.
\item Comprehensive experiments show that our new matching
algorithm achieves substantial improvements (up to
three to five times speed up) over the nature extensions
of state-of-the-art techniques.
\end{itemize}

\noindent\textbf{Roadmap.}
The rest of this paper is organized as follows. Section~\ref{relwork} introduces the related work. Section~\ref{overview} first gives an overview of architecture of spatial temporal image search system, then formally defines the problem of spatial temporal image search. Baseline technique are presented in Section~\ref{baseline}. We introduce the techniques should be adopted in Section~\ref{StvIndex}.  Extensive experiments are reported in Section~\ref{perform}. Section ~\ref{con} concludes the paper.

\subsection{Related Work}
\label{relwork}

In this section, we review the techniques of top-$k$ spatial keywords query, temporal spatial keywords query and content-based image retrieval, which are related to our work.

\noindent\textbf{Top-$k$ Spatial Keywords Query.} Spatial keywords query is a hot issue attracting a lot of researchers in the community of database and information retrieval. A spatial keyword query takes a user location and user-supplied keywords as arguments and returns web objects that are spatially and textually relevant to these arguments~\cite{DBLP:conf/er/CaoCCJQSWY12}. Many efficient
indexing techniques have been proposed such as R-tree~\cite{DBLP:conf/sigmod/Guttman84}, R$^*$-tree~\cite{DBLP:conf/sigmod/BeckmannKSS90}, IR~\cite{DBLP:journals/pvldb/CongJW09}. For top-$k$ spatial keyword query problem, Jo˜ao B. Rocha-Junior~\cite{DBLP:conf/ssd/RochaGJN11} et al. propose a novel index called Spatial Inverted Index (S2I) which maps each distinct term to a set of objects containing the term. Based on S2I, they designed efficient algorithms named SKA and MKA to improve the performance of top-$k$ spatial keyword queries. Li et al.~\cite{DBLP:journals/tkde/LiLZLLW11} proposed IR-tree that together with a top-$k$ document search algorithm facilitates four major tasks in document searches. Zhang et al.~\cite{DBLP:conf/edbt/ZhangTT13} proposed a scalable integrated inverted index named $I^3$ which adopts the Quadtree structure to hierarchically partition the data space into cells. Besides, they presented a new storage mechanism for efficient retrieval of keyword cell and preserve additional summary information to facilitate pruning. In their other works~\cite{DBLP:conf/sigir/ZhangCT14}, they proposed an effective approach to address the top-$k$ distance-sensitive spatial keyword query by modeling it as the well-known top-$k$ aggregation problem. Moreover, a novel and efficient approach named Rank-aware CA (RCA) algorithm is designed by them to improve the effectiveness of pruning. Zheng et al.~\cite{DBLP:conf/icde/ZhengSZSXLZ15} studied interactive top-$k$ spatial keyword (IT$k$SK) query and desinged a three-phase solution focusing on both effectiveness and efficiency. In order to solve the problem of top $k$ spatial keyword search (TOPK-SK) efficiently, Zhang et al.~\cite{DBLP:journals/tkde/ZhangZZL16} proposed a novel index structure called inverted linear quadtree (IL-Quadtree) which is designed to utilize both spatial and keyword based pruning techniques to effectively reduce the search space.

João B. Rocha-Junior et al.~\cite{DBLP:conf/edbt/Rocha-JuniorN12} solved the problem of processing top-$k$ spatial keyword queries on road networks for the first time. In this type of problem, the distance between the query location and the spatial object is the shortest path, rather than Euclidean distance. They presented novel indexing structures and algorithms that are able to process such queries efficiently. Guo et al.~\cite{DBLP:journals/geoinformatica/GuoSAT15} studied continuous top-$k$ spatial keyword queries on road networks for the first time. They presented two methods that can monitor such moving queries in an incremental manner and reduce repetitive traversing of network edges for better performance.

The approaches above-mentioned is to search spatial objects with spatial information and keywords. However, they are not adequately suitable to solve the problem of top-$k$ temporal spatial image search.

\noindent\textbf{Temporal Spatial Keywords Query.} The aforementioned approaches consider only the spatial information and textual content of objects. However, temporal information is another significant dimension which should be considered in the processing of query. Temporal spatial keywords query is another important problem concerned by many researchers in recent years. Mehta et al.~\cite{DBLP:conf/gis/MehtaSSV16} proposed a novel type of spatial-temporal-keyword query named $k$CD-STK query which combines keyword search with the task of maximizing the spatio-temporal coverage and diversity of the returned top-$k$ results. Furthermore, an efficient approach which utilizes a hybrid spatial-temporal-keyword index is introduced by them to substantially improve the efficiency of query. Nepomnyachiy et al.~\cite{DBLP:conf/gir/NepomnyachiyGJM14} introduced a search framework named 3W for geo-temporal stamped documents. Their system can efficiently processes multi-dimensional queries over text, space, and time. Chen et al.~\cite{DBLP:conf/icde/ChenCCT15} consider the temporal spatial-keyword top-$k$ Subscription (TaSK) query. The TaSK query takes into account three aspects of objects: text relevance, spatial proximity and recency. They introduced a new concept Conditional Influence Region (CIR) to represent the TaSK query and proposed an algorithm for making use of the filtering conditions (of each group of queries on each spatial cell) to efficiently address the problem of TaSK. However, these approaches mentioned are just suitable to textual query, rather than image retrieval.

\noindent\textbf{Content-Based Image Retrieval.} Recently, Content-based image retrieval (CBIR for short)~\cite{DBLP:journals/pami/JingB08,DBLP:journals/tomccap/LewSDJ06,DBLP:conf/mm/WangLWZ15,DBLP:journals/tip/WangLWZ17,DBLP:conf/mm/WangLWZZ14,DBLP:conf/sigir/WangLWZZ15} is widely noted in multimedia community, which is one of the fundamental research challenges. CBIR aims to search for images through analyzing their visual contents, and thus image representation~\cite{DBLP:conf/mm/WanWHWZZL14,DBLP:journals/tip/WangLWZZH15,YANGTCYB,YANGNeurocomputing,YANGKAIS,YangPAKDD14,LYACMMM13,YANGINS,DBLP:conf/cikm/WangLZ13,DBLP:journals/tnn/WangZWLZ17,DBLP:journals/pr/WuWLG18}. Local feature representations such as the bag-of-visual-words (BoVW) models~\cite{DBLP:conf/iccv/SivicREZF05,NNLS2018,DBLP:journals/ivc/WuW17} applying local feature descriptors such as SIFT~\cite{DBLP:conf/iccv/Lowe99,DBLP:journals/ijcv/Lowe04,DBLP:journals/cviu/WuWGHL18} and SURF~\cite{DBLP:conf/eccv/BayTG06,TC2018}. BoVWs represents an image by a vector of visual words which is constructed by vector quantization of feature descriptors~\cite{DBLP:conf/iccv/SivicZ03}. Many researches worked for this issue over the years. For example, Irtaza et al.~\cite{DBLP:journals/mta/IrtazaJAC14} proposed a neural network based architecture for content based image retrieval. In order to improve the capabilities of their approach, they designed an efficient feature extraction algorithm based on the concept of in-depth texture analysis. Bunte et al.~\cite{DBLP:journals/pr/BunteBJP11} used two different methods to learn favorable feature representations Limited Rank Matrix Learning Vector Quantization (LiRaMLVQ) and Large Margin Nearest Neighbor (LMNN). Zhao et al.~\cite{DBLP:conf/mm/ZhaoYYZ14} studied affective image retrieval and the performance of different features on different kinds of images. Xie et al.~\cite{DBLP:conf/icmcs/XieYH14} presented a hypergraph-based framework integrating image content, user-generated tags and geo-location information into image ranking problem. Zhu et al. studied the problem of content-based landmark image search and proposed multimodal hypergraph (MMHG) to characterize the complex associations between landmark images. Based on it, they designed a novel content-based visual landmark search system to facilitate effective image search. However, these methods do not consider the temporal information and geographical proximity of images. Thus they are not suitable to the problem of temporal spatial image search.
\section{System Overview}
\label{overview}

This section first gives an overview of architecture of spatial temporal image search system, then introduces the problem definition of spatial temporal image search.
\subsection{System Architecture}
The proposed spatial temporal image search system consists of three components, namely, preprocess, update, and query modules, as what is showed in Figure ~\ref{fig:fig3architecture}.

\begin{figure*}
\newskip\subfigtoppskip \subfigtopskip = -0.1cm
%\newskip\subfigcapskip \subfigcapskip = -0.2cm
\begin{minipage}[b]{0.99\linewidth}
\begin{center}
   %  \centering
     \subfigure[]{
     \includegraphics[width=0.90\linewidth]{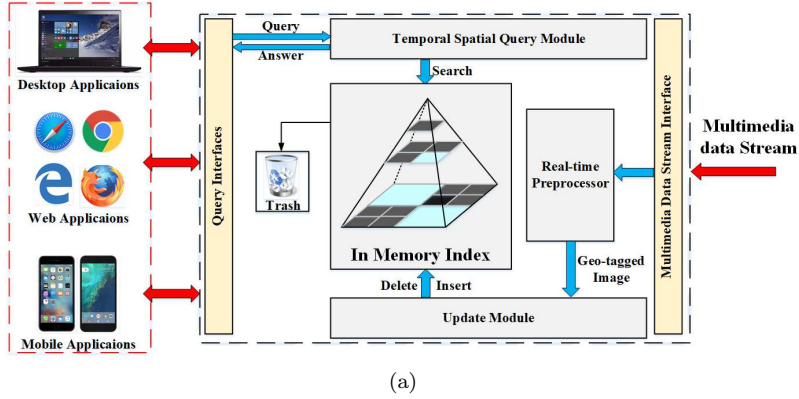}
     }
   %\captionsetup{justification=centering,margin=2cm}
   \captionsetup{justification=centering}
       \vspace{-0.2cm}
\caption{Spatial temporal image system architecture}
\label{fig:fig3architecture}
\end{center}
\end{minipage}
\label{fig:k}
\end{figure*}

\textbf{Preprocess Module.}
This module used to receive the incoming spatial temporal image, extracts the location of each geo-temporal image, and forwards each geo-temporal image along with its extracted location
to the update module with the form: (id, location, timestamp,
visual word list)which describes the geo-temporal image's identifier, geo-location, issuing
time, and image contents. Location is either a precise latitude
and longitude coordinates or the center of a Minimum Bounding
Rectangle. This module will not be further discussed in the following sections, because we directly use existing preprocessing programme to extract the location information from public datasets.

\textbf{Update Module.} The update module ensures all incoming
spatial temporal image can be inserted to in-memory indexes as soon as possible, and all incoming
spatial temporal image queries can be answered accurately from the in-memory indexes with minimum possible memory consumption. This is
done through two main tasks: (1) Inserting
newly coming spatial temporal image into the latest in-memory index structure; (2) Deleting expire spatial temporal image from the most forward in-memory index structure without sacrificing the query answer quality

\textbf{Query Module.}
Given a spatial temporal image search query, the query module
employs spatio-temporal vusial pruning techniques that reduce
the number of visited images to return the final answer. As the query module just retrieves the images in the index, the accuracy of result is mainly decided by the decisions taken at the update
module on which spatial temporal image will expire from the in-memory
index.

\subsection{Problem Statement}

In this section, we present problem definition and necessary preliminaries of top $k$ spatial temporal image search. Table~\ref{tab:skt notation} below summarizes the mathematical notations used throughout this section.

\begin{table}%\label{tab:symbols}
	\centering
    \small
	\begin{tabular}{|p{0.27\columnwidth}| p{0.62\columnwidth} |}
		\hline
		\textbf{Notation} & \textbf{Definition} \\ \hline\hline
		~$I(q)$             & s geo-temporal image (query)                                \\ \hline
        ~$I.\psi(q.\psi)$   & a set of visual words used to describe I (query q)                                \\ \hline
	  	~$I.loc(q.loc)$     & location of the image I (query q)                                 \\ \hline
        ~$I.t(q.t)$         & timestamp of the image I (query q)  \\ \hline
		~$\mathcal{V}$      & vocabulary                              \\ \hline
		~$v$                & a visual word in $\mathcal{V}$                                \\ \hline	
    	~$l$  & the number of query visual word in $q.\psi$ \\\hline
        ~$k$   & the number of results should be returned  \\\hline
        ~$\omega_{1}$  & the preference parameter to balance the spatial proximity, visual relevance and temporal recency \\ \hline
        ~$\omega_{2}$  & the preference parameter to balance the spatial proximity, visual relevance and temporal recency \\ \hline
        ~$\omega_{3}$  & the preference parameter to balance the spatial proximity, visual relevance and temporal recency \\ \hline
        ~$f_s(I,q)$     & the spatial proximity between $I.loc$ and $q.loc$                                \\ \hline
        ~$f_v(I,q)$         & the temporal recency between $I.\psi$ and $q.\psi$ \\ \hline
        ~$f_t(I,q)$         & the visual relevance between $I.t$ and $q.t$ \\ \hline
        ~$f_{stv}(I,q)$         & the spatial temporal visual ranking score between $I$ and $q$\\ \hline
	\end{tabular}
    \vspace{2mm}
    \caption{Notations} \label{tab:skt notation}	
    \vspace{-6mm}
\end{table}

In this section, $\mathcal{O}$ denotes a sequence of incoming stream geo-temporal images. A \textbf{geo-temporal image} is an image message with geo-location and timestamp, such as geo-temporal photos in Flickr. Formally, a geo-temporal image is modeled as $I$ = $<\psi, loc, t_c>$, where $o.\psi$ denotes a set of distinct visual words from a visual vocabulary set $\mathcal{V}$, $I.loc$ represents a geo-location with latitude and longitude, and $I.t$ represented the creation timestamp of the image.

\begin{definition}[\textbf{Top-$k$ Spatial Temporal Image Query}] \label{def:skt sktq}
A top-$k$ spatial temporal image query $q$ is defined as q = $<\psi,$ $loc,$ $t,$ $k>$, where $q.\psi$ is a set of distinct visual words extracted from query image, $q.loc$ is the query location, $q.t$ is the user submitted timestamp, $k$ is the number of the result user expected.
\end{definition}

\begin{definition} [\textbf{Spatial Proximity $f_s(q,I)$}]
Let $\delta_{max}$ denote the maximal distance in the space, the spatial relevance between the query $q$ and the image $I$, denoted by $f_s(q, I)$, is defined as $\frac{\delta(q.loc, I.loc)}{\delta_{max}}$.
\end{definition}

Similar to ~\cite{DBLP:journals/pvldb/CongJW09}, we adopt the language model based function
to measure the visual words' relevance of the image regarding the
query $q$, which is defined as follows.

\begin{definition} [\textbf{Visual Relevance $f_v(q,I)$}]
Let $w_{t,I}$ denote the weight of the visual word $v$ regarding image $I$, and
\begin{eqnarray}
\label{eqn:ilq rank1}
w_{v,I} &=& (1 - \xi)\frac{tf_{v,I}}{|I.\psi|} + \xi \frac{tf_{v,\mathcal{I}}}{|\mathcal{I}|},
\end{eqnarray}
where $tf_{v,I}$ and $tf_{v,\mathcal{I}}$ are the \texttt{term frequency} of $v$ in $I.\psi$ and
$\mathcal{I}$ respectively. Here, $\mathcal{I}$ represents the visual word information of all images in the whole dataset and $\xi$ is a smoothing parameter.
Then the visual relevance between $q$ and $I$ is defined as follows.
\begin{eqnarray}
\label{eqn:ilq rank2}
f_t(q,I) &=& 1 - \frac{ \prod_{v \in q.\psi } w_{v,I}}{\gamma_{max}}
\end{eqnarray}
where $\gamma_{max}$ is used for normalization.
\end{definition}

\begin{comment}
\begin{definition} [\textbf{Temporal Recency $f_t(q,I)$}]
Let $\lambda_{max}$ denote the maximal time difference in the timestamp, the temporal recency between the query $q$
and the image $i$, denoted by $f_t(q,I)$, is defined as $\frac{\lambda(I.t,q.t)}{\lambda_{max}}$.
\end{definition}
\end{comment}

\begin{definition} [\textbf{Temporal Recency $f_t(q,I)$}]
The temporal recency between the query $q$
and the image $i$, denoted by $f_t(q,I)$, is calculated by the following
exponential decay function:
\begin{equation}\label{eq:skt function}
f_t(q,I)=D^{-(q.t-I.t_c)}.
%X_t\in Y =\{0, O(l),R(l),A(O,l),A(R,l):1\leq l \leq v\}.
\end{equation}
where $D$ is base number that determines the rate of the recency
decay. The function is monotonically decreasing with $q.t-I.t_c$.
It is introduced in ~\cite{DBLP:conf/sigmod/LuLC11} and is applied (e.g., ~\cite{DBLP:conf/cikm/AmatiAG12}) as
the measurement of recency for stream data. Based on the
experimental studies ~\cite{DBLP:conf/sigir/EfronG11}, the exponential decay function has
been shown to be effective in blending the recency and text
relevancy of objects. Thus, we use the exponential decay function to blend the recency and visual
relevancy of image.
\end{definition}

Based on the spatial proximity, visual relevance and temporal recency
between the query and the spatial temporal image, the \textbf{Spatial-temporal Ranking Score} of an image $I$ regarding the query $q$ can
be defined as follows.

%($0\leq\alpha\leq1$)\footnote{$\alpha$=1 indicates that the user cares only about the spatial proximity of geo-textual objects, $\alpha$=0 gives the k most recent geo-textual objects in dataset}
\begin{definition} [\textbf{Spatial Temporal Visual Score} $f(q,I)$]
Let $\omega_{1}$, $\omega_{2}$, $\omega_{3}$  be the preference parameter specifying the trade-off among the spatial proximity, visual relevance and temporal recency, $\omega_{1}+\omega_{2}+\omega_{3}=1$ and  $\omega_{1}, \omega_{2}, \omega_{3}> 0$ we have
\begin{equation}\label{eq:skt function}
f(q,I)=\omega_{1}*f_s(q,I) + \omega_{2}*f_v(q,I) + \omega_{3}*f_t(q,I).
%X_t\in Y =\{0, O(l),R(l),A(O,l),A(R,l):1\leq l \leq v\}.
\end{equation}
Note that the images with the \textbf{small score values} are preferred (i,e., ranked higher).
\end{definition}

\begin{comment}
Hence, based on the given query $q$, the \textbf{spatial-keyword temporal score} is defined as follows.
\begin{equation}\label{eq:statespaceY}
f(o,q)=\alpha*f_{s}(o.loc,q.loc) + (1-\alpha)f_t(o.t,q.t).
%X_t\in Y =\{0, O(l),R(l),A(O,l),A(R,l):1\leq l \leq v\}.
\end{equation}
where $f_{s}(o.loc,q.loc)$ is the spatial proximity score of the distance between query $q$ and object $o$, $f_t(o.t,q.t)$ indicates the temporal recency score between $q$ and $o$.
\end{comment}

\begin{definition}[\textbf{Spatial Temporal Image Search}] \label{def:skt skts}
Given a set of geo-textual Image $I$ and a spatial temporal image query $q$, we aim to find the top $k$ geo-temporal images with \textbf{smallest} spatial temporal visual Score score.
\end{definition}

In addition, we require that only images $I$ with
$f_v(q,I) > 0$ can be returned as query results, so as to avoid
giving completely irrelevant answers to users’ queries. This
implies that $I$ and $q$ should have at least one common term.

In the section hereafter, we abbreviate the geo-temporal image and the geo-temporal query as $image$ and $query$ respectively, if there is no ambiguity. We assume there is a total order for visual words in $\mathcal{V}$, and the words in each query and image are sorted accordingly.
\section{Baseline}
\label{baseline}

Before proceeding to present the proposed solution, we discuss
the possibility of using conventional techniques for the processing of spatial temporal image queries. In the following, we develop two baselines by utilizing existing techniques: Inverted File Append (\ifa for short) and Spatial Temporal Visual Inverted Index(\stvii).

\subsection{Inverted File Append}
A nice property of the inverted indexing structure is that, for a given query $q$, only the objects containing at least one
query keyword will be involved in the search. Thus, inverted file is widely  used to process textual queries efficiently[14].

To adopt inverted file for spatial temporal image search, the
simplest approach is to treat each geo-temporal image as a document, and sort the entries in
each posting list in ascending order of the corresponding
geo-temporal images’ timestamps. This approach is highly efficient in
terms of geo-temporal image insertions, as new geo-temporal image can be easily
appended to the ends of posting lists without affecting the
ordering of entries. In terms of query efficiency, however, the
aforementioned approach faces a great challenge. This is because our
spatial temporal visual ranking score $f(q,I)$ evaluates a geo-temporal image $I$ based on three factors: its spatial proximity $f_s(q,I)$, visual relevance $f_v(q,I)$, and temporal recency $f_t(q,I)$. If the entries in a
posting list are sorted in ascending order of timestamps, the
corresponding geo-temporal image would be in ascending order of
temporal recency, regardless of their spatial proximity or visual relevance.
In other
words, the entry order does not provide any hint on the overall
score of each geo-temporal image. As a consequence, when answering
a query $q$�, we have to examine most of entries in all posting lists
relevant to $q$�, since the omission of any entry may render
the query results incomplete.

\subsection{Spatial Temporal Visual Inverted Index}
To the best of our knowledge, there is no index in the literature
that can filter objects taking into account the three criteria: spatial,
temporal and visual. Thus, in this section, we present an hybrid index, namely spatial temporal visual inverted index(\stvii), which may filter geo-temporal images taking into account the spatial, temporal and visual information simultaneously.

The \stvii is similar to traditional spatial keyword search index \sii. The major difference is how to use the spatial index. \sii uses a R-tree to select objects that are
spatially relevant, but \stvii employs a \tdr~\cite{DBLP:conf/icmcs/TheodoridisVS96}, which takes spatial and temporal dimensions into consideration synchronously.

Obviously, the \tdr can filter geo-temporal that are spatially and temporally
unrelated in the early phase of the query processing. Unfortunately, it may meets poor performance in some situation. The minimum bounding regions of \tdr are 3D rectangles, because the geo-temporal images stored in \tdr has three dimensions: time, latitude, and longitude. Meanwhile, space and time are
not correlated dimensions. Thus, the minimum bounding regions of \tdr may cover large areas
of the space, which may result in large area overlap. In this scenario, long periods of time or large spatial regions may give rise to poor performance.

\section{Spatial Temporal Visual Indexing}
\label{StvIndex}
Due to massive amount of spatial temporal images and queries are being generated at an unprecedented scale, three main objectives have to be satisfy in our spatial temporal visual indexing. First, the proposed index has to be able to handle
high arrival rates of incoming spatial temporal images. Second,expired objects can be deleted from its index with the approximative rate
as insertion. Third, a large number of unpromising spatial temporal images can be filtered at a cheap cost.

\subsection{Our proposed: Hierarchical Information Quadtree}
\label{sec:skt framwork}

Based on the above requirement, in this sub-section, we present a \textbf{H}ierarchical \textbf{I}nformation \textbf{Q}uadtree (\hiq for short)
that supports update at high arrive rate and provides the following required functions for spatial temporal image search and ranking: \MyRoman{1})\textbf{temporal filtering}: all the temporally irrelevant trees, nodes and images have to be accessed as late as possible to follow the chronological order; \MyRoman{2})\textbf{spatial filtering}: all the spatially irrelevant nodes have to be filtered out as early as possible to shrink the search space; \MyRoman{3})\textbf{visual word filtering}: all the visually irrelevant trees, nodes and images have to be discarded as early as possible to cut down the search cost;  and  \MyRoman{4})\textbf{relevance computation and ranking}: since only the top-$k$ images are returned and $k$ is expected to be much smaller than the total number of similar images, it is desirable to have an incremental search process that integrates the computation of the joint relevance, and image ranking seamlessly so that the search process can stop as soon as the top-$k$ images are identified.
Figure ~\ref{fig:fig4hiq} shows two levels of \hiq, namely, temporal segment and inverted quadtree.

\begin{figure*}
\newskip\subfigtoppskip \subfigtopskip = -0.1cm
%\newskip\subfigcapskip \subfigcapskip = -0.2cm
\begin{minipage}[b]{0.99\linewidth}
\begin{center}
   %  \centering
     \subfigure[]{
     \includegraphics[width=0.90\linewidth]{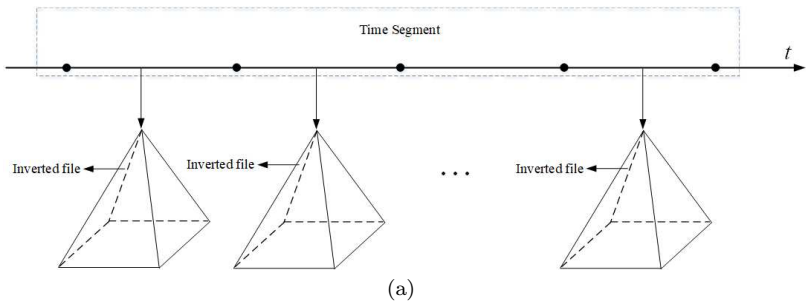}
     }
   %\captionsetup{justification=centering,margin=2cm}
   \captionsetup{justification=centering}
       \vspace{-0.2cm}
\caption{Example of hierarchical information quadtree}
\label{fig:fig4hiq}
\end{center}
\end{minipage}
\label{fig:k}
\end{figure*}

\textbf{Temporal Segment}

To achieve fast insertion and deletion, all spatial temporal images are temporally partitioned into successive disjoint index segments. For example, each segment only indexes
the data of $T$ hours. Thus, it makes sure that the newly
incoming spatial temporal images are always inserted in the most recent segment. Once the segment spans $T$ hours of data, the segment is terminated and a new empty segment
is generated to insert the new spatial temporal image.

\textbf{Inverted Quadtree}

To support high arrival rates of incoming objects, space-partitioning index (e.g., \qdtree~\cite{DBLP:journals/cacm/Gargantini82,DBLP:conf/pakdd/WangLZW14,DBLP:journals/corr/abs-1708-02288,DBLP:conf/icde/ZhangZZL13,DBLP:journals/tkde/ZhangZZL16,TSMCS2018}, and \pyr~\cite{DBLP:conf/pods/ArefS90,DBLP:conf/ijcai/WangZWLFP16,DBLP:journals/pr/WuWGL18,TII2018}) is more famous than object-partitioning index (e.g., \rtree).
As space-partitioning index is more suitable to high update system because of its disjoint space decomposition policy, while the shape of object-partitioning index is highly affected by the rate and order of incoming data, which may trigger a large number of node splitting and merging. Meanwhile, inverted file, which is the most efficient index for text information retrieval, can easily extend to visual words. Thus, we propose a hybrid indexing structure, namely inverted quadtree,
that utilizes both indexing structures in a combined fashion.

The inverted quadtree is essentially an quadtree, each node of which is enriched
with reference to an inverted file for the images contained in
the sub-tree rooted at the node. In particular, each node of an inverted quadtree contains all spatial, temporal, and visual words information; the first is in the form of a rectangle, the second is in the form of timestamp, and the last is in the form of an inverted file.

More formally, the leaf node of \iqdtree has the form ($I$, $r$, $t$). where $I$ refers to a set of images belonged to current node, $r$ is the area covered by current node, and $t$ is the latest timestamp aggregated from the images. A leaf node also contains a pointer to a visual inverted file for the visual words of the images being indexed.

An visual inverted file consists of a vocabulary for all distinct visual words in a collection of images and a set of posting lists related to this vocabulary. Each
posting list is a sequence of visual pairs $<ip, w_{I,v}>$, where $ip$ refers to
a image containing visual words $v$, and $w_{I,v}$ is the weight of term
$v$ in image $I$.

An inner node contains a number of entries of the form ($cp$, $r$, $t$). $cp$ are the address of the children nodes, $r$ is the area covered by current node, and $t$ is the latest timestamp aggregated from its children nodes. A inner node also contains a pointer to a visual inverted file which is aggregated from the visual inverted files from its child node. This inverted file includes all images in the entries of current node, enabling us to estimate a bound of the visual relevancy to a
query of all images contained in the subtree rooted at current node. The
weight of each visual word $v$ in the inverted index
is the maximum weight of the visual word in the images contained in
the subtree rooted at current node. Fig depicts an \iqdtree indexing structure.

%\begin{figure*}
%\newskip\subfigtoppskip \subfigtopskip = -0.1cm
%%\newskip\subfigcapskip \subfigcapskip = -0.2cm
%\begin{minipage}[b]{0.99\linewidth}
%\begin{center}
%   %  \centering
%     \subfigure[]{
%     \includegraphics[width=0.90\linewidth]{tree.pdf}
%     }
%   %\captionsetup{justification=centering,margin=2cm}
%   \captionsetup{justification=centering}
%       \vspace{-0.2cm}
%\caption{Example of Inverted Quadtree}
%\label{fig:fig5iqtree}
%\end{center}
%\end{minipage}
%\label{fig:k}
%\end{figure*}

\subsection{Processing of Spatial Temporal Image Queries}\label{sec:skt query}

\begin{algorithm}
\begin{algorithmic}[1]
%\small
%\scriptsize
\footnotesize
\caption{\bf Spatial Temporal Image Search($q$, $k$, $\mathcal{I}$) }
\label{alg:skt search}
\INPUT $q~:$ the spatial temporal image query, $k~:$ the number of image return,
   $\mathcal{I}~:$ current \hiq index
\OUTPUT $\mathcal{R}:$ top-$k$ query result results

\STATE $\mathcal{R}:= \emptyset$; $\mathcal{H} = \emptyset$, $\lambda_{max} = \infty$

\STATE $\mathcal{H}\leftarrow$ new a min first heap

\STATE build frequency signature for query
\label{alg:skt_build fsig}
%\State{$q.nsig\leftarrow$ calculate query's node signature}
%\label{alg:node sig}
%\State{$q.esig\leftarrow$ calculate query's entry signature}
%\label{alg:ags sig}
%\State{$gain\leftarrow MIND_{st}(q,index[inv].root)$}
\STATE $\mathcal{H}$.Enqueue($\mathcal{I}.root, MIND_{st}(q,\mathcal{I}.root)$)
\label{alg:skt push root}

\WHILE{ $\mathcal{H} \not = \emptyset$ }
     \label{alg:skt_loop_s}
     \STATE $e \leftarrow$ the node popped from $\mathcal{H}$
     \IF{$e$ is a leaf node}
        \label{alg:skt_leaf s}

            \FOR{ each image $I$ in node $e$}
                \IF{$f_{stv}(q, I) \leq \lambda_{max}$}
                            \STATE $\lambda_{max} \leftarrow f_{stv}(q, I)$
                            \STATE update $\mathcal{R}$ by $(I, f_{stv}(q, I))$
                             \label{alg:skt_leaf e}
                \ENDIF
            \ENDFOR

     \ELSE
        \FOR{ each child $e'$ in node $e$}
            \label{alg:skt_non-leaf s}
            \IF{$MIND_{stv}(q, e') \leq \lambda_{max}$}
               \STATE $\mathcal{H}$.Enqueue($e',  MIND_{stv}(q, e')$)
               \label{alg:skt_non-leaf e}
            \ENDIF
        \ENDFOR
     \ENDIF

    \STATE process the root node of next \hiq
    \label{alg:skt_process next}
\ENDWHILE
\RETURN{$\mathcal{R}$}
\label{alg:skt_return}
\end{algorithmic}
\end{algorithm}

We proceed to present an important metric, the minimum spatial temporal visual score $MIND_{stv}$, which will be used in the query processing. Given a query $q$ and a node $N$ in the \hiq, the metric $MIND_{stv}$ offers a lower bound on the actual spatial temporal visual score between query $q$ and the images enclosed in the rectangle of node $N$. This bound can be used to order and efficiently prune the paths of the search space in the \hiq.

\begin{definition} [$MIND_{stv}(q,N)$]
The score of a query point $q$ from a node $N$ in
the \hiq, denoted as $MIND_{stv}(q,N)$, is defined as follows:
\begin{equation}\label{eq:skt mindst}
\begin{aligned}
MIND_{stv}(q,N)=\omega_{1}*MIND_{s}(q, N) + \\
 \omega_{2}*MIND_{v}(q,N) + \omega_{3}*MIND_{t}(q,N)
\end{aligned}
\end{equation}
where $\omega_{1}$, $\omega_{2}$, and $\omega_{3}$are the same as in Equation \ref{eq:skt function};
$MIND_{s}(q, N)$ is the minimum Euclidian distance between $q.loc$ and $N.r$, $MIND_{v}(q, N)$ is the minimum visual relevance between $q.\psi$ and $N.\psi$,
$MIND_{t}(q, N)$ is the minimum time recency between $q.t$ and $N.t$.
\end{definition}

A salient feature of the proposed \hiq structure is that
it inherits the nice properties of the \qdtree for query processing.

\begin{theorem}\label{lemma:skt point_prune}
Given a query point $q$, a node $N$, and a set of geo-temporal images $\mathcal{I}$ in node $N$, for any $I\in\mathcal{I}$, we have $f_{stv}(q,N)\leq DIST_{stv}(q,I)$.
\end{theorem}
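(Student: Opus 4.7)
The plan is to decompose the spatial-temporal-visual score into its three weighted components and establish a per-component lower bound, after which linearity of the weighted sum finishes the proof. Writing
\[
f_{stv}(q,I) = \omega_1 f_s(q,I) + \omega_2 f_v(q,I) + \omega_3 f_t(q,I),
\]
and comparing with the definition of $MIND_{stv}(q,N)$, it suffices to show that for every image $I\in\mathcal{I}$ stored in the subtree rooted at $N$,
\[
MIND_s(q,N) \le f_s(q,I),\quad MIND_v(q,N) \le f_v(q,I),\quad MIND_t(q,N) \le f_t(q,I).
\]
Multiplying by the positive weights $\omega_1,\omega_2,\omega_3$ and summing then yields $MIND_{stv}(q,N) \le f_{stv}(q,I)$, which is the claim (reading $DIST_{stv}$ in the statement as $f_{stv}$).

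For the spatial component, the \hiq construction guarantees $I.loc \in N.r$ for every image indexed at $N$, so the Euclidean distance $\delta(q.loc, I.loc)$ is at least the minimum distance from $q.loc$ to the rectangle $N.r$; normalizing by $\delta_{max}$ yields the desired inequality. For the visual component, the inverted file attached to $N$ stores, for each visual word $v$, the maximum weight $w_{v,I'}$ over all images $I'$ in the subtree. Because $f_v(q,I) = 1 - \bigl(\prod_{v \in q.\psi} w_{v,I}\bigr)/\gamma_{max}$ is monotonically decreasing in each $w_{v,I}$, substituting these per-word maxima produces the smallest $f_v$ value consistent with the summary at $N$; this value is exactly $MIND_v(q,N)$, which therefore bounds $f_v(q,I)$ from below for every $I$ in the subtree.

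The temporal component is the subtlest step and the main obstacle. Each inner node retains only the latest descendant timestamp $N.t$, so one has to translate this summary into a bound that is oriented correctly for the decay function $f_t(q,I)=D^{-(q.t - I.t_c)}$. Since $I.t_c \le N.t$ and queries in the streaming setting satisfy $q.t \ge N.t$, we have $q.t - I.t_c \ge q.t - N.t$, and monotonic decrease of $D^{-x}$ gives a usable bound on $f_t(q,I)$ in terms of $N.t$; we take $MIND_t(q,N)$ to be this bound. One also has to verify that the aggregation rule propagating $N.t$ upward in the \iqdtree is consistent with the bound, so that the inequality carries through at every internal node up to the root.

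Once the three component inequalities are in hand, the proof concludes by taking the $\omega_1,\omega_2,\omega_3$ convex combination, which preserves inequalities with non-negative coefficients. The only non-routine part is pinpointing which timestamp summary supports the temporal lower bound; the spatial and visual parts follow immediately from the node summaries maintained by the \iqdtree, and the final assembly is a one-line linearity argument.
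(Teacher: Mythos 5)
Your overall strategy --- per-component lower bounds combined via the positive weights $\omega_1,\omega_2,\omega_3$ --- is exactly the paper's proof, and your spatial and visual steps coincide with (and in the visual case justify more carefully than) the paper's: containment $I.loc\in N.r$ gives $MIND_s(q,N)\le f_s(q,I)$, and the per-word maximum weights stored in the node's inverted file, together with the fact that $f_v(q,I)=1-\bigl(\prod_{v\in q.\psi}w_{v,I}\bigr)/\gamma_{max}$ decreases in each $w_{v,I}$, give $MIND_v(q,N)\le f_v(q,I)$. The final weighted-sum assembly is also the same as in the paper.

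The temporal step, however, does not deliver the inequality your assembly needs, and the problem is the direction. From $I.t_c\le N.t$ and $q.t\ge N.t$ you correctly get $q.t-I.t_c\ge q.t-N.t$, and since $D^{-x}$ is decreasing this yields $f_t(q,I)=D^{-(q.t-I.t_c)}\le D^{-(q.t-N.t)}$. That is an \emph{upper} bound on $f_t(q,I)$: if you set $MIND_t(q,N):=D^{-(q.t-N.t)}$ as you propose, you have shown $MIND_t(q,N)\ge f_t(q,I)$, the reverse of the inequality $MIND_t(q,N)\le f_t(q,I)$ required before multiplying by $\omega_3$ and summing. Because $N.t$ is the \emph{latest} descendant timestamp, the quantity it controls is the maximum of $f_t$ over the subtree, not the minimum; a genuine lower bound under the paper's ``smaller score is better'' convention would need the oldest timestamp in the subtree, i.e. $MIND_t(q,N)=D^{-(q.t-\min_{I\in\mathcal{I}}I.t_c)}$, or equivalently an aggregation rule that propagates minimum timestamps upward. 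To be fair, the paper's own proof glosses over exactly this point --- it simply asserts the temporal inequality from the fact that $N.t$ is the maximum timestamp --- so your instinct that this is the delicate step is right; but as written your argument establishes the wrong-direction bound, and the temporal component of the claim is not proved.
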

\begin{proof}
Since geo-temporal image $I$ is enclosed in the rectangle of node $N$, the
minimum Euclidian distance between $q.loc$ and $N.r$ is
no larger than the Euclidian distance between $q.loc$ and $I.loc$:

\begin{equation*}
MISD_{s}(q.loc, N.r)\leq f_{s}(q.loc, I.loc)
\end{equation*}
For each timestamp $t$, $N.t$ is the maximum value $\mathcal{I}.t$ of all the geo-temporal images in node $N$. Hence:
\begin{equation*}
MISD_{t}(q.loc, N.r)\leq f_{t}(q.loc, o.loc)
\end{equation*}
For each visual word $v$, $w_{N.v}$ (the weight of the visual word in $N$, which is the inverted file of node $N$) is the maximum value $w_{I.v}$ of all the geo-temporal images in node $N$. Thus:
\begin{equation*}
MISD_{v}(q.\psi, N.\psi)\leq f_{t}(q.\psi, I.\psi)
\end{equation*}
According to Equation \ref{eq:skt function} and Equation \ref{eq:skt mindst}, we obtain:
\begin{equation*}
MIND_{stv}(q,N)\leq f_{st}(q,I)
\end{equation*}
thus completing the proof.
\end{proof}

When searching the \hiq for the $k$ objects nearest to a
query $q$, one must decide at each visited node of the \hiq
which entry to search first. Metric $MIND_{ST}$ offers an approximation
of the spatial-temporal ranking score to every entry in the node and, therefore, can
be used to direct the search. Note that only node satisfied the constraint of query keywords need to be loaded into memory
and compute $MIND_{ST}$.

To process spatial temporal image queries with \hiq framework, we
exploit the best-first traversal algorithm for retrieving
the top-k objects. With the best-first traversal algorithm, a priority
queue is used to keep track of the nodes and objects that have yet to
be visited. The values of $f_{st}$ and $MIND_{st}$ are used as the keys
of objects and nodes, respectively.

When deciding which node to visit next, the algorithm picks the
node $N$ with the smallest $MIND_{st}(q, N)$ value in the set of all
nodes that have yet to be visited. The algorithm terminates when $k$
nearest objects (ranked according to Equation \ref{eq:skt function}) have been found.

Algorithm~\ref{alg:skt search} illustrates the details of the \hiq based spatial temporal image query. A minimum heap $\mathcal{H}$ is employed to keep the \hiq's nodes where the key of a node is its minimal spatial temporal visual ranking score. For the input query, we calculate its frequency signature in Line~\ref{alg:skt_build fsig}. In Line ~\ref{alg:skt push root}, we find out the root node of current time segment, calculate the minimal spatial temporal visual ranking score for the root node, and then pushed the root node into the $\mathcal{H}$. The the algorithm executes the while loop (Line~\ref{alg:skt_loop_s}-\ref{alg:skt_process next})until the top-$k$ results are ultimately reported in Line~\ref{alg:skt_return}.

%Based on the given query $q$, in Line~\ref{alg:node sig}-Line~\ref{alg:ags sig} the node signature and $AGS$ signature calculated respectively.

In each iteration, the top entry $e$ with minimum spatial temporal visual ranking score is popped from $\mathcal{H}$. When the popped node $e$ is a leaf node(Line~\ref{alg:skt_leaf s}), for each signature in node $e$, we will iterator  extract the image and check whether its spatial temporal visual score is less than $\lambda_{max}$. If its score is not larger than $\lambda_{max}$, we push $I$ into result set and add update $\lambda_{max}$. When the popped node $e$ is a non-leaf node(Line~\ref{alg:skt_non-leaf s}), a child node $e^{'}$ of $e$ will be pushed to $\mathcal{H}$ if its minimal spatial temporal visual ranking score between $e^{'}$ and $q$, denoted by $MIND_{stv}(q, e^{'}$, is not larger than $\lambda_{max}$ (Line~\ref{alg:skt_non-leaf s}-~\ref{alg:skt_non-leaf e}). We process the root node of next interval in Line~\ref{alg:skt_process next}. The algorithm terminates when $\mathcal{H}$ is empty and the results are kept in $\mathcal{R}$.
%Let $e.gain$ denotes the minimal spatial-temporal ranking score of entry $e$, for the root node of next interval, if its minimal spatial-temporal ranking score is less than that of $e$, we push the root node to $\mathcal{H}$ in Line~\ref{alg:skt_next s}-~\ref{alg:skt_next_e}.

\section{PERFORMANCE EVALUATION}
\label{perform}

In this section, we present the results of a comprehensive performance study to evaluate the effectiveness and efficiency of our techniques proposed in this paper.

\noindent\textbf{Workload.} A workload for this experiment consists of 100 input queries, and the average query response time are employed to evaluate the performance of the algorithms. The query locations are randomly selected from the underlying dataset. By default, the number of query visual keywords varies from 10 to 200, the number of results $k$ grows from 10 to 100, and the preference parameter $\omega$ changes from 0:1 to 0:9.

Experiments are run on a PC with Intel i7 7700K 4.20GHz CPU and 16GB RAM running Ubuntu 16.04 LTS Operation System. All algorithms in the experiments are implemented in Java. For a fair comparison, we tune the important parameters of the competitor algorithms for their best performance. Particularly, the node capacity of all algorithms is set to 100. Our measures of performance include insertion time, deletion time, storage overhead, number of node access and response time. The rest of this section evaluates index maintenance and query processing.

\begin{table}
\centering
\caption{Information of datasets} \label{tab:prob}

\centering
\begin{tabular}{cccc}

\hline
Datasets& Number of Images& Dist. Visual Words Number& Avg. Visual Words Number\\
\hline
200K&	200000&   	616347&  	124.7\\
400K&	400000&	    613940&  	118.2\\
600K&	600000&    	613026&  	114.3\\
800K&	800000&	    607401& 	128.6\\
1M&	    1000000&	612905& 	134.8\\
\hline
\end{tabular}

\end{table}

\noindent\textbf{Dataset.} We first evaluate the scalability and performance of our system on an image dataset of over one million images crawled from the photo-sharing site, Flickr, using Oxford landmarks as queries. For the scalability and performance evaluation, we randomly sampled five sub datasets whose sizes vary from 100,000 to 2000,000 from the image dataset.

\subsection{Index Maintenance}
In this subsection, we evaluate the insertion time, deletion time, storage overhead of all the algorithms.

\noindent\textbf{Evaluation on insertion time.} Fig. gives the performance when varying the arrival rate from 200 to 3200. It is clearly that the average insertion time of these three algorithm gradually grows with the increasing of arrival rate. Both HIQ and IFA have a better performance than STVI, shown as Fig.(a). As IFA adopts a simple data structure, the insertion need less time than HIQ. Fig.(b) illustrates that the average insertion time of HIQ and STVI with varying node capacity from 100 to 500. As HIQ has no node capacity, we just compare HIQ and STVI. Apparently, the time cost of STVI is nearly 4 times of HIQ.

\noindent\textbf{Evaluation on deletion time.} Fig indicates that the average deletion time of these three algorithm. With the rising of arrival rate from 200 to 3200, all of them increase step by step. Like the situation of evaluation on insertion time, the performance of IFA is the best due to its simple structure. the deletion time of our method is less than STVI. We can find out from the Fig.(b) that when varying the node capacity from 100 to 500, the performance of HIQ and STVI wave slightly. The former fluctuate between 70 and 75, and the latter  As expected, HIQ has the best performance when node capacity changes.
%\begin{figure*}
%\newskip\subfigtoppskip \subfigtopskip = -0.1cm
%\newskip\subfigcapskip \subfigcapskip = -0.2cm
%\begin{minipage}[b]{1\linewidth}
%\begin{center}
%   %  \centering
%     \subfigure[{Varying arrival rate}]{
%     \includegraphics[width=0.48\linewidth]{fig-varying-arrival-rate-insertion.eps}
%     }
%   %\captionsetup{justification=centering,margin=2cm}
%   \captionsetup{justification=centering}
%       \vspace{-0.2cm}
%\caption{Index insertion time}
%\label{fig:fig-index-insertion-time}
%\end{center}
%\end{minipage}
%\end{figure*}
%
%\begin{figure*}
%\newskip\subfigtoppskip \subfigtopskip = -0.1cm
%%\newskip\subfigcapskip \subfigcapskip = -0.2cm
%\begin{minipage}[b]{1\linewidth}
%\begin{center}
%   %  \centering
%     \subfigure[{Varying arrival rate}]{
%     \includegraphics[width=0.48\linewidth]{fig-varying-arrival-rate-deletion.eps}
%     }
%     \subfigure[{Varying node capacity}]{
%     \includegraphics[width=0.48\linewidth]{fig-varying-node-capacity-deletion.eps}
%     }
%   %\captionsetup{justification=centering,margin=2cm}
%   \captionsetup{justification=centering}
%       \vspace{-0.2cm}
%\caption{Index deletion time}
%\label{fig:fig-index-deletion-time}
%\end{center}
%\end{minipage}
%\end{figure*}

\subsection{Time Evaluation}
In this subsection, we evaluate the query response time of all the algorithms.

\noindent\textbf{Effect of the number of query visual words.} To investigate the response time of IFA, STVI and our HIQ algorithm under queries with different number of visual words, we increase the number of visual query visual words $l$ from 10 to 200. Fig.(a) evaluates the response time of these three methods where the number of query visual words varies from 10 to 200. Not surprisingly, all of them increase gradually with the rising of $l$ and the performance of HIQ is the best. The time cost of HIQ and STVI grow faster than IFA when $l<100$.

\noindent\textbf{Effect of the number of returned results.} We increase the number of results $k$ from 10 to 200 and evaluate the time cost of these three methods. In Fig.(b), our methods HIQ shows superior performance in comparison with other algorithms, which increases step by step with the growth of $k$. Clearly, The trend of STVI is similar to HIQ. On the other hand, the response time of IFA is almost unchanged.

\noindent\textbf{Effect of dataset size.} We study the response time under different sizes of image datasets $n$. The experimental results are shown in Fig.(c). It is obvious that with the increasing of $n$, the time cost of IFA, STVI and HIQ ascend by degrees. When $n>600k$, the climbing of them is slow down. Not surprisingly, the performance of our method is the best of them.

%\begin{figure*}
%\newskip\subfigtoppskip \subfigtopskip = -0.1cm
%%\newskip\subfigcapskip \subfigcapskip = -0.2cm
%\begin{minipage}[b]{1\linewidth}
%\begin{center}
%   %  \centering
%     \subfigure[{varying the number of query visual words}]{
%     \includegraphics[width=0.48\linewidth]{fig-number-of-query-visual-words.eps}
%     }
%     \subfigure[{Varying the number of returned results}]{
%     \includegraphics[width=0.48\linewidth]{fig-number-of-returned-results.eps}
%     }
%     \subfigure[{Varying the dataset size}]{
%     \includegraphics[width=0.48\linewidth]{fig-dataset-size.eps}
%     }
%     \subfigure[{Varying $\omega_1$}]{
%     \includegraphics[width=0.48\linewidth]{fig-omega1.eps}
%     }
%     \subfigure[{Varying $\omega_2$}]{
%     \includegraphics[width=0.48\linewidth]{fig-omega2.eps}
%     }
%     \subfigure[{Varying $\omega_3$}]{
%     \includegraphics[width=0.48\linewidth]{fig-omega3.eps}
%     }
%   %\captionsetup{justification=centering,margin=2cm}
%   \captionsetup{justification=centering}
%       \vspace{-0.2cm}
%\caption{Time evaluation}
%\label{fig:fig-response-time}
%\end{center}
%\end{minipage}
%\end{figure*}

\noindent\textbf{Effect of weight $\omega$.} In the next experiment, we vary $\omega_1$ from 1/7 to 5/7, setting $\omega_2$ = $\omega_3$ = $(1 - \omega_1)/2$. Fig.(d) demonstrates the processing cost of each method as a function of $\omega_1$. We can observe that the performance of these three methods are practically unchanged in the interval $1/7<\omega_1<5/7$. Like the evaluation above-mentioned, the performance of HIQ is the highest among them. Fig.(e) illustrate the results of evaluation on varying $\omega_2$. We can see that with the growth of $\omega_2$ all of the performance of these three algorithms slightly slow down. In Fig.(f), the response time of HIQ and IFA gently decrease and the performance of STVII is almost unchanged.

\section{Conclusion}
\label{con}
To the best of our knowledge, this is the first work to study the problem of spatial temporal image queries over streaming spatial temporal multimedia stream, which has a wide spectrum of application. To tackle with this problem, we propose a novel spatial temporal visual indexing structure, namely \hiq, efficiently organize a massive number of streaming spatial temporal images such that each incoming query submitted by users can rapidly find out the top-$k$ results. An efficient spatial temporal image search algorithm based on \hiq is designed to deal with this problem. Extensive experiments
demonstrate that our technique achieves a high throughput performance over streaming spatial temporal multimedia data.

\textbf{Acknowledgments:} This work was supported in part by the National Natural Science Foundation of China
(61379110, 61472450, 61702560), the Key Research Program of Hunan Province(2016JC2018), and project 2018JJ3691 of Science and Technology Plan of Hunan Province.

%\begin{acknowledgements}
%If you'd like to thank anyone, place your comments here
%and remove the percent signs.
%\end{acknowledgements}

% BibTeX users please use one of
%\bibliographystyle{spbasic}      % basic style, author-year citations
%\bibliographystyle{spmpsci}      % mathematics and physical sciences
%\bibliographystyle{spphys}       % APS-like style for physics
%\bibliography{}   % name your BibTeX data base

% Non-BibTeX users please use
\bibliographystyle{spmpsci}      % mathematics and physical sciences

\bibliography{ref}

\end{document}
% end of file template.tex